\newtheorem{theorem}{Theorem}
\newcommand{\figWidth}{0.9\linewidth}	%0.48
\newcommand{\vSpacing}{\vspace*{0.05cm}}
\newcommand{\subHeading}[1]{
\vSpacing
\noindent \textbf{#1} \  
}
\begin{document}
%
% paper title
% can use linebreaks \\ within to get better formatting as desired
%\title{Bare Demo of IEEEtran.cls for Conferences}
%\title{A distributed low-delay maximal independent set-based scheduling protocol for wireless sensor networks}
\title{A Maximal Concurrency and Low Latency Distributed Scheduling Protocol for Wireless Sensor Networks}

% author names and affiliations
% use a multiple column layout for up to three different
% affiliations
%\author{\IEEEauthorblockN{Michael Shell}
%\IEEEauthorblockA{School of Electrical and\\Computer Engineering\\
%Georgia Institute of Technology\\
%Atlanta, Georgia 30332--0250\\
%Email: http://www.michaelshell.org/contact.html}
%\and
%\IEEEauthorblockN{Homer Simpson}
%\IEEEauthorblockA{Twentieth Century Fox\\
%Springfield, USA\\
%Email: homer@thesimpsons.com}
%\and
%\IEEEauthorblockN{James Kirk\\ and Montgomery Scott}
%\IEEEauthorblockA{Starfleet Academy\\
%San Francisco, California 96678-2391\\
%Telephone: (800) 555--1212\\
%Fax: (888) 555--1212}}
\author{\IEEEauthorblockN{Xiaohui Liu}
\IEEEauthorblockA{Department of Computer Science\\
Wayne State University\\
Detroit, MI\\
xiaohui@wayne.edu}
\and
\IEEEauthorblockN{Hongwei Zhang}
\IEEEauthorblockA{Department of Computer Science\\
Wayne State University\\
Detroit, MI\\
hongwei@wayne.edu}}

% conference papers do not typically use \thanks and this command
% is locked out in conference mode. If really needed, such as for
% the acknowledgment of grants, issue a \IEEEoverridecommandlockouts
% after \documentclass

% for over three affiliations, or if they all won't fit within the width
% of the page, use this alternative format:
% 
%\author{\IEEEauthorblockN{Michael Shell\IEEEauthorrefmark{1},
%Homer Simpson\IEEEauthorrefmark{2},
%James Kirk\IEEEauthorrefmark{3}, 
%Montgomery Scott\IEEEauthorrefmark{3} and
%Eldon Tyrell\IEEEauthorrefmark{4}}
%\IEEEauthorblockA{\IEEEauthorrefmark{1}School of Electrical and Computer Engineering\\
%Georgia Institute of Technology,
%Atlanta, Georgia 30332--0250\\ Email: see http://www.michaelshell.org/contact.html}
%\IEEEauthorblockA{\IEEEauthorrefmark{2}Twentieth Century Fox, Springfield, USA\\
%Email: homer@thesimpsons.com}
%\IEEEauthorblockA{\IEEEauthorrefmark{3}Starfleet Academy, San Francisco, California 96678-2391\\
%Telephone: (800) 555--1212, Fax: (888) 555--1212}
%\IEEEauthorblockA{\IEEEauthorrefmark{4}Tyrell Inc., 123 Replicant Street, Los Angeles, California 90210--4321}}

% use for special paper notices
%\IEEEspecialpapernotice{(Invited Paper)}

% make the title area
\maketitle

\begin{abstract}
Existing work that schedules concurrent transmissions collision-free suffers from low channel utilization. We propose the Optimal Node Activation Multiple Access (ONAMA) protocol to achieve maximal channel spatial reuse through a distributed maximal independent set (DMIS) algorithm. To overcome DMIS's excessive delay in finding a maximal independent set, we devise a novel technique called pipelined precomputation that decouples DMIS from data transmission. We implement ONAMA on resource-constrained TelosB motes using TinyOS. Extensive measurements on two testbeds independently attest to ONAMA's superb performance 
%in contrast with 
compared to existing work: improving concurrency, throughput, and delay by a factor of 3.7, 3.0, and 5.3, respectively, while still maintaining reliability.
%The abstract goes here.
\end{abstract}
% IEEEtran.cls defaults to using nonbold math in the Abstract.
% This preserves the distinction between vectors and scalars. However,
% if the conference you are submitting to favors bold math in the abstract,
% then you can use LaTeX's standard command \boldmath at the very start
% of the abstract to achieve this. Many IEEE journals/conferences frown on
% math in the abstract anyway.

% no keywords

% For peer review papers, you can put extra information on the cover
% page as needed:
% \ifCLASSOPTIONpeerreview
% \begin{center} \bfseries EDICS Category: 3-BBND \end{center}
% \fi
%
% For peerreview papers, this IEEEtran command inserts a page break and
% creates the second title. It will be ignored for other modes.
%\IEEEpeerreviewmaketitle

%-- Sections go here.
\section{Introduction}

Due to the broadcast nature of wireless communication, access to the shared wireless channel has to be coordinated to avoid interference. There are generally two ways to achieve this, contention-based and TDMA-based. Contention-based protocols are easy to realize, but their performance can be highly dynamic and unpredictable owing to collision and random backoff, especially when node density is high and traffic load is heavy. This explains the prevalence of TDMA-based protocols in scenarios where quality of service (e.g., high throughput, high reliability, and low delay) has to be provided. For instance, the WirelessHART \cite{wirelesshart} and ISA SP100.11a \cite{isa100} standards defined for industrial monitoring and control are both TDMA-based.

%Collision is one of the fundamental hurdles in ensuring QoS in wireless communication. It occurs when multiple packets are transmitted at the same time and get corrupted at the receivers as a result. Collision hurts QoS by decreasing reliability and throughput, increasing delay because of the ensuing retransmissions. They waste the already scarce channel resource. 

%To avoid collision, a series of protocols have been proposed to schedule concurrent transmissions conflict-free in multi-hop wireless networks. 
Numerous TDMA-based protocols have studied how to schedule channel access under interference constraint in multi-hop wireless networks. In the Node Activation Multiple Access protocol (NAMA)\cite{nama:bao:mobicom01}, a node only accesses the channel if its priority is higher than all of its conflicting neighbors', where the priority is a hash function of its unique id and the current time slot. While NAMA ensures no two nodes transmit % or transmits bcoz of "no"??
 simultaneously if they interfere with each other, it can lead to severe concurrency loss and channel underutilization. Figure~\ref{fig:lama_concurrency_loss} illustrates such an example. The number inside each node represents its priority. There is a link between two nodes if their transmissions collide. According to NAMA, only node with priority 7 can be active although nodes with priority 1, 2, 3, and 4 can transmit as well without causing collision.

Given the growing spectrum deficit resulted from ever-increasing demands and fixed amount of spectrum, we aim to squeeze the most capacity out of the limited spectrum. And we propose the Optimal Node Activation Multiple Access (ONAMA) scheduling protocol that activates as many nodes as possible while ensuring collision-free scheduling. It formulates the scheduling problem into finding a maximal independent set in the conflict graph. In graph theory, an independent set is a set of nodes in a graph, none of which are adjacent. An independent set is maximal if adding any other node makes it no longer an independent set. ONAMA includes the distributed maximal independent set (DMIS) algorithm, which identifies a maximal independent set (MIS) of a graph given all node priorities, calculated the same way as NAMA does. %It can activate a node with intermediate priority as long as the node is in the MIS.

\begin{figure}[!tbhp]
\centering
\includegraphics[trim = 0 0 150 550, clip, width=0.7\linewidth]{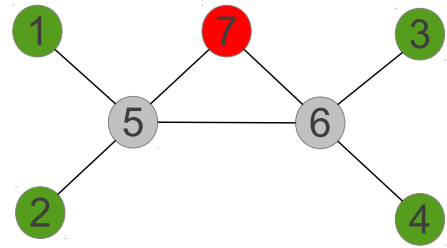}
\caption{NAMA's concurrency loss}
\label{fig:lama_concurrency_loss}
\end{figure}

% when the contention entities are links
% sender-receiver coordination
%When the network, environment, or traffic changes, the overlay conflict graph can also change depending on the interference model. It takes different time for the latest topology to propagate to each node within its conflict set.  Hence, the local contending link set at the sender and receiver of a link can be inconsistent. This is problematic. For instance, in a time slot, the sender, finding the outgoing link to the receiver enjoys the highest priority among its conflict set, declares the link wins the channel and transmits a packet to the receiver. The receiver, whose local conflict set is different from the sender's, may find another outgoing link originating from it wins the channel and it decides to transmit as well. Consequently, collision occurs at the receiver.

% precomputation vs on the fly
%One issue, if left unattended, 
NAMA can compute the schedule for each slot on the fly because it requires no packet exchange. By contrast, it takes multiple rounds of packet exchange for DMIS to find the schedule (i.e., MIS) for a slot. What's worse, the wireless channel is susceptible to packet loss.
%The unreliability of wireless channel exacerbates this situation. 
If ONAMA also computes the schedule on the fly, it introduces significant delays for data delivery, especially in large networks. To reduce delay incurred by MIS computation while activating as many nodes as possible, we decouple the computation of MIS from data transmission by precomputing it in advance. When a certain slot comes, ONAMA simply looks up the precomputed MIS on the fly and activates a node if and only if it is in the MIS.
% pipelining
To further reduce delay, we organize the precomputation of MISs for consecutive slots in a pipeline.
%At any slot, the MIS for the next M slots are being computed simultaneously through pipelining and intermediate results are stored. 

\subHeading{Contributions of this paper.}(1) We develop a distributed scheduling protocol ONAMA that attains maximal activation in a multi-hop wireless network while causing no collision, even if the network is dynamic. It greatly enhances channel spatial reuse % utilization
 compared to the state of the art. (2) We devise a novel technique called pipelined precomputation to address the excessive delay in basic ONAMA. (3) We implement ONAMA on extremely resource-constrained TelosB \cite{telosb} motes using TinyOS \cite{tinyos}. Evaluation in two independent testbeds has verified it increases concurrency by a factor of 3.7, increases throughput by a factor of 3.0, and reduces delay by a factor of 5.3 compared to existing work while maintaining reliability.

\subHeading{Organization of this paper.}Section~\ref{section:protocol} details the ONAMA protocol. We evaluate it comprehensively in Section~\ref{section:evaluation}. Section~\ref{section:relatedwork} discusses related work. We wrap up the article by drawing some conclusions and pointing out some future directions.

\section{The ONAMA Protocol}	\label{section:protocol}
We elaborate on the design and implementation of the ONAMA protocol in this section. First we introduce the baseline version of ONAMA, namely the distributed MIS (DMIS) algorithm, to compute the MIS of a graph in a fully decentralized way. Next, we reduce DMIS's excessive delay by a novel approach called pipelined precomputation. Finally we tackle some challenges in implementing ONAMA on resource-scarce embedded devices.

To unify node- and link-based transmission in a single framework, we construct a corresponding conflict graph on top of the underlying wireless network. In the conflict graph, a node represents a contention entity, a node or a link, in the original network; a link between two nodes exists if data transmissions of the two represented contention entities interfere with each other according to an interference model (e.g., the Physical-Ratio-K model \cite{prk:zhang:secon10}). % introduce PRK model here, pairwise but consider cumulative inteference
Once we acquire scheduling for a conflict graph, it is straightforward to translate it to the corresponding wireless network. From now on, graph is synonymous with conflict graph. We assume time on all nodes is synchronized and divided into slots. This can be attained by, for example, running a time synchronization protocol \cite{prks:zhang:tr}. We also assume each node has a unique id. % and data is transmitted at a fixed power level across the network.

\subsection{Distributed MIS}
Inspired by the observation that NAMA can severely underutilize the channel, we decide to activate as many nodes as possible while still ensuring no two neighboring nodes are active together. In other words, we activate a maximal independent set of a graph in each slot through the following distributed MIS algorithm.
%, a modification of the FastMISv2 algorithm \cite{fastmisv2} adapted to the peculiarities of wireless sensor networks (i.e., limited bandwidth, memory, and computational power).
Even though it resembles some other existing distributed MIS algorithms in appearance, especially the FastMISv2 algorithm \cite{fastmisv2}, it is essentially different from them to factor in the peculiarities of wireless sensor networks, i.e., limited bandwidth, memory, and computational power.

%In ONAMA, time is divided into slots. A node transmits at most one data packet %and multiple control packets, which are shorter, in one slot.
%ONAMA use Time division multiple access (TDMA) with the assistance of a time synchronization protocol such as the Flooding Time Synchronization Protocol (FTSP). 
In DMIS, a node stays in one of three states below at any given time: 
\begin{itemize}
	\item	UNDECIDED:  	it has not decided whether to join the MIS or not.
	\item	ACTIVE: 		it joins the MIS.
	\item	INACTIVE: 	it does not join the MIS.
\end{itemize}
Initially, all nodes are UNDECIDED. In any slot $t$, each node computes the priority of itself and its neighbors according to Equation \ref{equation:hash}
\begin{equation}		\label{equation:hash}
p_i = Hash(i \oplus t) \oplus i.
\end{equation}
$i$ is the node id, $Hash(x)$ is a fast message digest generator that returns a random integer by hashing x, and $p_i$ is i's priority. $\oplus$ concatenates its two operands. Note the second $\oplus$ is necessary to guarantee all nodes' priorities are distinct even when $Hash()$ returns the same number on different inputs. Based on the priority for each node, DMIS computes a MIS for slot t in multiple phases. Each phase consists of three steps:
\begin{enumerate}
  \item Node $v$ exchanges nodal states with its neighbors. \label{enum:state_exchange} % sends its nodal state to its neighbors.
  \item If node $v$'s priority is higher than all its ACTIVE and UNDECIDED neighbors', it enters the MIS by marking itself ACTIVE; conversely, if any of its higher priority neighbors is ACTIVE, it marks itself INACTIVE.	\label{enum:state_transition}
  \item Node $v$ proceeds to the next phase only if its state is UNDECIDED.
\end{enumerate}
% exemplify w/ an evolving diagram?
When no node is UNDECIDED, the algorithm terminates.
Theorem~\ref{theorem:dmisconvergence} shows DMIS does terminate in finite phases.
\begin{theorem}		\label{theorem:dmisconvergence}
DMIS terminates in finite phases.
\end{theorem}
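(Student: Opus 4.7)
The plan is to exhibit a strictly decreasing potential on phases and invoke finiteness of the node set. Concretely, let $U_k$ denote the set of UNDECIDED nodes entering phase $k$, and let $\phi_k = |U_k|$. I would show that $\phi_{k+1} \leq \phi_k - 1$ whenever $U_k \neq \emptyset$, so that after at most $|V|$ phases no UNDECIDED nodes remain and the algorithm halts by its stopping rule.

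The crux is a single invariant: in any phase $k$ with $U_k \neq \emptyset$, the UNDECIDED node $v^\star$ of globally maximum priority is guaranteed to transition to ACTIVE during phase $k$. Because Equation~\ref{equation:hash} concatenates the node id to the hash output, priorities are pairwise distinct, so $v^\star$ is unique. I would argue in two steps. First, by maximality of $v^\star$ in $U_k$, none of its UNDECIDED neighbors has higher priority. Second, $v^\star$ has no ACTIVE neighbor at all: if some neighbor $u$ were ACTIVE, then $u$ entered the MIS in some earlier phase $k' < k$, at which instant $v^\star$ was already UNDECIDED since state transitions are monotone away from UNDECIDED. The activation rule at phase $k'$ required $u$'s priority to exceed all $u$'s UNDECIDED neighbors', in particular $v^\star$; but then at the end of phase $k'$ the rule for $v^\star$ would have forced it INACTIVE, contradicting $v^\star \in U_k$. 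Consequently, in step 2 of phase $k$ the snapshot available to $v^\star$ shows only lower-priority UNDECIDED neighbors and no ACTIVE neighbors, so $v^\star$ activates and exits $U_{k+1}$. This yields $\phi_{k+1} \leq \phi_k - 1$, and termination follows after at most $|V|$ phases.

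The main obstacle is pinning down the synchronous semantics cleanly. Step 1 must be interpreted as snapshotting neighbor states before any step 2 writes take effect that phase; otherwise two adjacent UNDECIDED nodes could in principle both observe each other as not-yet-ACTIVE and both activate, which would not break termination but would break correctness of the produced MIS. Under the snapshot interpretation, the strict priority comparison in step 2 together with distinctness of priorities resolves such ties in favor of exactly one candidate, which is precisely what the lemma above exploits. I would state the snapshot assumption explicitly as a precondition and verify that priorities are fixed throughout the computation for a given slot $t$ (they depend only on $i$ and $t$), so that comparisons across phases refer to the same totally ordered set.
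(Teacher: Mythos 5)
Your proof is correct, but it takes a genuinely different route from the paper's. The paper argues by peeling: in the first phase the set $\mathbb{A}$ of all locally priority-maximal nodes activates, in the next phase their neighbors $\mathbb{I}$ deactivate, and removing $\mathbb{A}\cup\mathbb{I}$ leaves a strictly smaller residual subgraph on which the same two-phase process repeats; finiteness of the node set then gives termination. You instead define the potential $\phi_k=|U_k|$ and show it strictly decreases each phase by tracking only the \emph{globally} maximum-priority UNDECIDED node $v^\star$. Your version makes explicit something the paper leaves implicit (that the peeled set is nonempty, which is exactly the existence of a global maximum), and it yields a per-phase progress guarantee and hence an explicit $|V|$ bound on the number of phases, whereas the paper's bound is two phases per peeling round. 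The paper's decomposition, on the other hand, removes many nodes at once and is the natural stepping stone to the $O(\log n)$ expected-phase bound the authors cite afterward; your single-node potential cannot be sharpened in that direction.

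One caveat: your sub-claim that $v^\star$ has \emph{no} ACTIVE neighbor, and therefore necessarily activates, has an off-by-one relative to the paper's semantics. Because a node only learns of a neighbor's activation at the state exchange of the \emph{following} phase, a neighbor $u$ that activated in phase $k-1$ is still invisible to $v^\star$ when $v^\star$ runs step 2 of phase $k-1$, so $v^\star$ can legitimately enter phase $k$ UNDECIDED while having an ACTIVE neighbor; it then becomes INACTIVE (not ACTIVE) in phase $k$ via the converse clause of step 2, since $p_u>p_{v^\star}$. This does not damage your argument --- in every case $v^\star$ exits the UNDECIDED set during phase $k$, so $\phi_{k+1}\leq\phi_k-1$ still holds --- but the lemma should be stated as ``$v^\star$ is decided in phase $k$'' rather than ``$v^\star$ activates in phase $k$.'' Your remark about snapshot semantics in step 1 is well taken and consistent with how the paper's own proof treats the one-phase lag between activation and the deactivation of neighbors.
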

\begin{proof}
Given a graph $G$ and distinct priorities for all its nodes, in the first phase, there are a set of nodes $\mathbb{A}$ going from UNDECIDED to ACTIVE because their priorities are higher than all of their neighbors'. In the following phase, all neighbors of $\mathbb{A}$, denoted as $\mathbb{I}$, become INACTIVE. Removing all nodes in $\left \{ {\mathbb{A} \cup \mathbb{I}} \right\}$ and their adjacent links from $G$, we denote the resulting subgraph as $G'$ and the node set of $G'$ as $V(G')$. No node in $\mathbb{A}$ is adjacent with any node in $V(G')$ because otherwise that node in $V(G')$ would be INACTIVE and in $\mathbb{I}$. So removing all nodes in $\mathbb{A}$ and their adjacent links from $G$ does not affect the ensuing phases. Similarly, removing all nodes in $\mathbb{I}$ and their adjacent links from $G$ does not affect neither since a node becomes ACTIVE or INACTIVE irrespective of its INACTIVE neighbors in phase~\ref{enum:state_transition}. After the first two phases, we only have to run DMIS on the residual subgraph $G'$. After two more phases, $G'$ becomes $G''$. This process continues till the residual subgraph is empty. Because some nodes are removed every time we go from a graph to its subgraph and there are finite nodes in a graph, DMIS terminates in finite phases.
\end{proof}
% Given a graph G and priorities for all its verticies, what's the average distance from any node to its LAMA winner?
% Nearest winner? No. Furthest winner? No.
% decision wave does not propagate at speed of 1 hop/phase? True for line topology, not necessarily for other. Since a node stops if some neighbors are UNDECIDED and none is ACTIVE
Analogous to complexity analysis in \cite{fastmisv2}, it's easy to prove that DMIS terminates in $O(\log(n))$ phases on average, where $n$ is the number of nodes in the graph. Interested readers can refer to \cite{fastmisv2} for more details of the proof. % really??
Theorem~\ref{theorem:dmiscorrect} shows DMIS finds a MIS of the graph.
% hzhang: Instead of an asymptotic analysis, we can have a theorem here showing 1) DMIS does converge to a state of MIS, 2) the exact convergence time. The latter can be used in discussing how to choose M and related tradeoff discussions in following subsections. BTW, "convergence time" is a better word than "running time". A distributed protocol runs infinitely in general and does not have to stop.
\begin{theorem}		\label{theorem:dmiscorrect}
The set $\mathbb{S}$ of all ACTIVE nodes is a MIS of the graph after DMIS terminates.
\end{theorem}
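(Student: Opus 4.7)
The plan is to prove two things: (a) $\mathbb{S}$ is an independent set, and (b) $\mathbb{S}$ is maximal in the sense that no node outside $\mathbb{S}$ can be added without violating independence. Together these are the definition of a MIS.

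For (a) I would argue by contradiction. Suppose two adjacent nodes $u,v$ both end up ACTIVE. Equation~\ref{equation:hash} guarantees distinct priorities (this is precisely why the second $\oplus$ is needed), so assume without loss of generality that $u$ has the higher priority. First, $u$ and $v$ cannot both transition to ACTIVE in the same phase: the transition rule in step~\ref{enum:state_transition} requires a node's priority to exceed that of all ACTIVE and UNDECIDED neighbors, and $v$ sees $u$ as an UNDECIDED neighbor of higher priority, so $v$ cannot promote itself in that phase. Hence one of them transitions strictly before the other. If $u$ goes ACTIVE first in some phase $k$, then in phase $k+1$ after the state exchange of step~\ref{enum:state_exchange}, $v$ observes $u$ as an ACTIVE neighbor of higher priority and, by the second clause of step~\ref{enum:state_transition}, becomes INACTIVE; since INACTIVE is absorbing (the rule never transitions a node out of INACTIVE), $v$ cannot be in $\mathbb{S}$, a contradiction. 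The symmetric case where $v$ transitions first contradicts the assumption that $u$ has the higher priority, since $v$ would see $u$ as a higher-priority UNDECIDED neighbor and be blocked from promotion.

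For (b), the termination condition given just before Theorem~\ref{theorem:dmisconvergence} states that the algorithm halts only when no node is UNDECIDED, so every node not in $\mathbb{S}$ is INACTIVE. By the second clause of step~\ref{enum:state_transition}, a node becomes INACTIVE only when one of its higher-priority neighbors is ACTIVE. Therefore every node outside $\mathbb{S}$ has at least one neighbor inside $\mathbb{S}$, and adding any such node to $\mathbb{S}$ would break independence, giving maximality.

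The main obstacle is the first part: being careful about the synchronous, phase-based semantics so that the possibility of two adjacent nodes becoming ACTIVE in the same phase is cleanly ruled out. Once distinct priorities are exploited to exclude simultaneous promotion, and once one notices that INACTIVE is an absorbing state, both independence and maximality follow in a few lines. Combining (a) and (b) yields the theorem.
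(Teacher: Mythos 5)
Your proposal is correct and follows essentially the same route as the paper: independence by contradiction using the distinct priorities, and maximality from the fact that every INACTIVE node has a higher-priority ACTIVE neighbor. Your version is somewhat more careful than the paper's about the phase-by-phase semantics (explicitly ruling out simultaneous promotion and noting that INACTIVE is absorbing), but the decomposition and key ideas are identical.
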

\begin{proof}
We prove the theorem in two steps. 
\begin{enumerate}
	\item $\mathbb{S}$ is an independent set. 
	
	We prove this by contradiction. Suppose $\mathbb{S}$ is not an independent set, then there are two adjacent nodes $u$ and $v$ in $\mathbb{S}$. Since a node only becomes ACTIVE if its priority is higher than all its ACTIVE and UNDECIDED neighbors' and $u$ is ACTIVE, $p_u > p_v$. Likewise, $p_v > p_u$. Contradiction. Thus $\mathbb{S}$ is an independent set.
	\item The independent set $\mathbb{S}$ is maximal. 
	% Again, we prove this by contradiction. Suppose $\mathbb{S}$ is not maximal, then we can add a node $w \not \in \mathbb{S}$ such that $\left \{ {w \cup \mathbb{S}} \right\}$ is still an independent set. 
	
	When DMIS terminates, a node is either ACTIVE or INACTIVE. $\forall u \not \in \mathbb{S}$, it is INACTIVE, which means there exists an ACTIVE neighbor $v \in \mathbb{S}$, whose priority is higher than $u$'s. $\left \{ {u \cup \mathbb{S}} \right\}$ is not independent since $u$ and $v$ are adjacent. Hence $\mathbb{S}$ is a MIS.
\end{enumerate}
\end{proof}

The resulting MIS is the set containing all ACTIVE nodes, which are to be activated in slot t. It is noteworthy that exchanged nodal state does not include priority, which is computed locally even for neighbors'.
%\subsection{Convergence time}
% choice of M
% O(log(n)) proof as in fast MIS??
% worst case: a line sorted by priority -> O(n)
%$M = D * T$
%D: network diameter
%T: depends on node degree and link reliability using control power
%\subsection{Control signalling}
%In step~\ref{enum:state_exchange} of a phase, a node exchange DMIS states with neighbors by sending and receiving control packets, which can piggyback upper layer payload as well if space permits. We further divide each slot into S subslots, out of which one is reserved for data packets and the rest for control packets. Only one data or control packet can be transmitted in each subslot. A node transmits control packets at a power level that can reach all neighbors reliably, which may well be higher than the power level for data transmission. To avoid collision of control packets, each node alternates to transmit control packets in a round-robin fashion with all nodes that interfere with it at its control power level. Interested readers can find detailed discussions on this in \cite{prks:zhang:tr}. Apart from nodal state exchange, control signalling also helps a node tracks all its neighbors and keeps the graph up to date. For this purpose, each node maintains a table containing all its potential neighbors, whose size is L.

\subsection{Pipelined precomputation}
One salient feature of NAMA is that it requires no packet exchange to compute a schedule and can thus be called on the fly. In a TDMA setting, a node can call NAMA as a subroutine at the beginning of a slot to instantaneously determine if it should be active in that slot. The idea of doing the same for DMIS is enticing, but the ability to do so proves elusive. This is because, 
% One naive way to use DMIS is to call it on the fly at the beginning of of a slot to determine if a node should be active in that slot, i.e., if the node is in the resulting MIS. However,
unlike NAMA, running DMIS requires multiple phases and each phase incurs significant delay mainly due to contention to access the shared channel and unreliable channel in step~\ref{enum:state_exchange}. The resulting excessive delay for data delivery is detrimental for a wide range of time-sensitive applications.

To reduce DMIS's delay while retaining its high concurrency, we decouple it from data transmission by precomputing MIS of a slot M slots in advance. 
M is chosen such that DMIS converges within M slots, at least with high probability. In slot t, DMIS starts computing MIS for future slot $(t + M)$ using nodes' priorities at slot $(t + M)$. The intermediate result, i.e., the current MIS, is stored till slot $(t + M)$. When time reaches slot $(t + M)$, a node simply looks up the precomputed MIS and decides to become active or silent, without computing it on the fly like in NAMA.
Since it takes M slots to compute the MIS for each slot, in slot t, MISs for slot $(t + 1), (t + 2), ..., (t + M)$ are being computed. We organize their computation into a pipeline, where the MIS computation of consecutive M slots overlaps. This is more efficient than computing them sequentially. Moreover, we aggregate a vector of M states and exchange them in a single control packet at once, other than convey each of the M states using a separate packet. This greatly saves channel resources.

Figure~\ref{fig:pipelined_precomputation} shows an example of the proposed precomputed pipeline in action when M is 4. The x-axis denotes time in slot, the y-axis denotes the slot whose MIS is being computed. The computation of MIS for slot 4 starts at slot 0 and continues in slots 1, 2, and 3. In slot 4, MIS for this slot has been precomputed and is ready for immediate activation. Similarly, MIS for slot 5 is ready at slot 5, MIS for slot 6 is ready at slot 6, and so on. In slot 3, MISs for the upcoming slots 4, 5, 6, and 7 are being computed simultaneously.

\begin{figure}[!tbhp]
\centering
\includegraphics[trim = 0 0 0 400, clip, width=\figWidth]{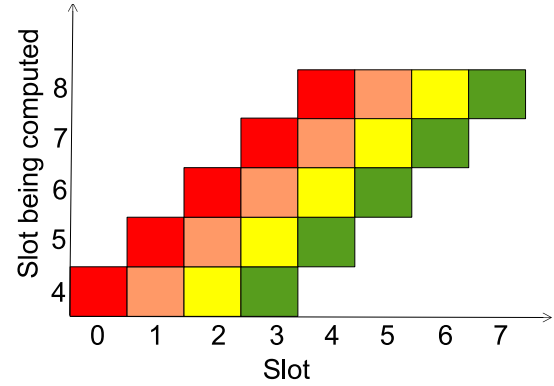}
\caption{Pipelined precomputation}
\label{fig:pipelined_precomputation}
\end{figure}

\subsection{Dynamic graph}
A graph changes over time as nodes join or leave the network, links establish or break. %, due to factors such as time-varying traffic or environment.
This change confuses DMIS because it assumes the graph remains static before it converges. We solve this issue by a snapshot-base approach. Specifically, when starting to compute the MIS for a future slot $(t + M)$ in slot t, we take a snapshot of the graph and use it for the remaining computation even the graph changes within M slots. Hence, the graph is consistent for each call of DMIS. One potential side effect of this approach is that ONAMA defers the usage of the latest graph, which may degrade application performance at upper layer. 
Even if this is the case, the degradation can be mitigated by making M smaller as we shall discuss in subsection~\ref{subsection:impl}.	%??

\subsection{Implementation issues}	\label{subsection:impl}
A typical embedded device is equipped with extremely limited memory. For instance, a TelosB \cite{telosb} mote has only 10 KB of RAM. Implementing ONAMA on such resource-constrained devices poses an additional challenge that is not found on resource-rich ones. To overcome this challenge, we expose several key parameters for fine tuning to let upper layer trade off between memory usage and performance. There are two places in ONAMA that can expend significant amount of memory, especially when the network is large.
\begin{enumerate}
\item
% grouped snapshots
Each node maintains a table containing all its potential neighbors with size L. To store graph snapshot for each slot, a node needs 1 bit for each node in its neighbor table, indicating whether they interfere or not. It thus costs each node $L * M$ bits to store local graph snapshots. To reduce snapshot footprint, we take snapshots every other G slots, instead of every slot. After each snapshot, DMIS uses it to compute the MIS of the next G consecutive slots. This reduces snapshot footprint by a factor of G. Nevertheless, larger G is not always desirable since it  makes the protocol less agile to graph change. G can be tuned to strike a balance between memory consumption and protocol agility.

\item
% multiple ctrl packets in a slot
%Ideally, M can be set arbitrarily large to guarantee distributed MIS convergence. However, each nodes stores $L * M$ intermediate states for precomputation in practice, which implicitly enforces an upper bound for M. 
In step~\ref{enum:state_exchange} of a phase in DMIS, a node exchanges states with neighbors by sending and receiving control packets, which can piggyback upper layer payload as well if space permits. We further divide each slot into S subslots, out of which one is reserved for data packets and the rest for control packets. Only one data or control packet can be transmitted in each subslot. In total, each node stores $L * M$ intermediate states for pipelined precomputation. On the one hand, because a fixed number of control packets are needed for the convergence of DMIS for a slot on a given graph,  a larger S packs more control packets into a slot and thus lessens M and memory expenditure. On the other hand, a larger S also increases control overhead and lowers channel utilization for data delivery. A judicious selection of M again depends on memory consumption and performance tradeoff.
\end{enumerate}

\section{Evaluation} \label{section:evaluation}

\subsection{Methodology}
% Discuss the relation of ONAMA to PRKS here or somewhere else in the paper, and mention how ONAMA is used in PRKS (e.g.,  "link" being treated as "ONAMA" in PRKS).

% integrate (O)LAMA w/ PRKS
% introduces PRK model? No, high-level description suffices
%In the PRK model, a node C interfers with the transmission from another node S to its receiver R if and only if Inequality~\ref{equation:prkmodel} holds:
%\begin{equation}		\label{equation:prkmodel}
%	P(C, R) >= \frac{P(S, R)}{K_{S, R, T_{S, R}}}
%\end{equation}
%It is noteworthy that even though the PRK model defines pairwise interference, it 
%(A, B) interferes with (C, D) if A interferes with (C, D) or C with (A, B).
%
%PRKS uses a control-theoretic approach to instantiating
%the PRK model according to in-situ network and environmental
%conditions, and, through purely local coordination,
%the distributed controllers converge to a state where the desired
%link reliability is guaranteed.
%
%Based on the conflict sets of links,
%data transmissions along individual links can be scheduled in a
%distributed, TDMA manner according to the Link-Activation-
%Multiple-Access (LAMA) algorithm
%
%At the beginning of a time slot, every node executes
%the LAMA scheduling algorithm to decide whether any of its
%associated links will be active in this time slot.

We test ONAMA as a component of the PRK-based scheduling (PRKS) protocol. PRKS \cite{prks:zhang:tr} is a TDMA-based distributed protocol to enable predictable link reliability based on the Physical-Ratio-K (PRK) interference model \cite{prk:zhang:secon10}. The PRK model marries the amenability to distributed protocol design of the ratio-K model (i.e., interference range = K * communication range) and the high fidelity of the signal-to-interference-plus-noise ratio (SINR) model. Through a control-theoretic approach, PRKS instantiates the PRK model parameters according to in-situ network and environment conditions so that each link meets its reliability requirement after convergence. As stated in Section~\ref{section:protocol}, PRKS essentially defines a conflict graph for a given wireless network: a node in the graph denotes a link with data transfer in the network, and a link exists between two nodes in the graph if the corresponding links in the network interference with each other according to the PRK model and its instantiated parameters. Under the condition that link reliability is ensured, PRKS schedules as many nodes as possible in the conflict graph, which is exactly what ONAMA intends to do. Besides ONAMA, PRKS contains many other components such as link estimator, controller, forwarder, time synchronization, and logging. The runtime interactions and resource sharing between ONAMA and them can cause undesirable effects that do not manifest when running ONAMA in isolation. By integrating ONAMA as a part of complex applications like PRKS, we can test the robustness of its design and implementation.

To demonstrate the benefits of ONAMA, we compare the following two PRKS variants:
\begin{itemize}
	\item	PRKS-NAMA: PRKS running on top of NAMA.
	\item	PRKS-ONAMA: PRKS running on top of ONAMA.
\end{itemize}
At the beginning of a time slot, every node calls the NAMA or ONAMA component to decide whether any incident links shall be active in this slot. We measure their performances in two sensor network testbeds, NetEye \cite{neteye} and Indriya \cite{indriya}. 

\subHeading{Network and traffic settings.}
In NetEye and Indriya, we choose each node with probability 0.8 and 0.5, respectively, and the resulting set of nodes forming a random network. For each chosen node A, another node B (also in the random network) is chosen such that the packet delivery ratio (PDR) of the link from A to B is at least 95\% in the absence of interference. For each link, the sender transmits a 128-byte packet to the receiver every 20 ms. Data transmission power is fixed at -25dBm, i.e., power level 3 in TinyOS.
%For more details of the network and traffic settings, please refer to \cite{prks:zhang:tr}.

\subsection{Measurement results}
\subHeading{NetEye testbed.}
For various PDR requirements, Figure~\ref{fig:prks_onama_vs_nama_vs_iorder_concurrency} shows the mean concurrency (i.e., number of concurrent transmissions in a time slot) of PRKS-NAMA and PRKS-ONAMA, as well as a state-of-the-art centralized scheduling protocol iOrder \cite{iorder:che:icnp11}, which also activates as many links as possible while ensuring each link meets its PDR requirement. %, which outperforms well-known interference-oriented scheduling protocols such as Longest-Queue-First \cite{dlqf:shroff:mobihoc10}, GreedyPhysical \cite{greedyphysical:brar:mobicom06}, and LengthDiversity \cite{lengthdiversity:wattenhofer:mobihoc07}. 
Not only does PRKS-ONAMA significantly increase concurrency over PRKS-NAMA by up to 270\%, but also it achieves a concurrency statistically equal or close to that of iOrder despite its distributed nature. This clearly demonstrates the effectiveness of DMIS in ONAMA to activate more links simultaneously.

Figures~\ref{fig:prks_nama_pdr_vs_pdr_req_boxplot} and~\ref{fig:prks_onama_pdr_vs_pdr_req_boxplot} show the boxplots of PDR in PRKS-NAMA and PRKS-ONAMA for different PDR requirements, respectively. We see even though PRKS-ONAMA improves concurrency and channel reuse enormously, it still guarantees that link PDRs meet requirements as PRKS-NAMA does. %As PDR requirements increase, PDRs in both protocols increase to meet them accordingly.

Because of higher concurrency without PDR sacrifice, PRKS-ONAMA's throughput is expected to be higher than PRKS-NAMA's as verified by Figure~\ref{fig:peer_slot_throughput_bar}. Specifically, PRKS-ONAMA's throughput is 3.0, 2.9, 2.7, 2.5 times of PRKS-NAMA's when the PDR requirement is 70\%, 80\%, 90\%, and 95\%, respectively. We also note that as PDR requirement increases, throughputs in both protocols decrease due to lower concurrency.

Figure~\ref{fig:peer_latency_bar} shows PRKS-ONAMA reduces the mean delay of PRKS-NAMA by a factor of 5.3, 4.6, 4.0, and 3.8 when the PDR requirement is 70\%, 80\%, 90\%, and 95\%, respectively. This significant improvement is due to both DMIS and pipelined precomputation in ONAMA.

% 2.30 vs 8.58 vs 8.95
\begin{figure}[!tbhp]
\centering
\includegraphics[width=\figWidth]{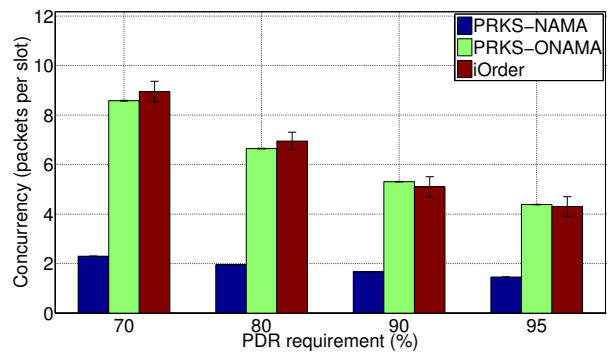}
\caption{Mean concurrency in NetEye}
\label{fig:prks_onama_vs_nama_vs_iorder_concurrency}
\end{figure}

\begin{figure}[!tbhp]
\centering
\includegraphics[width=\figWidth]{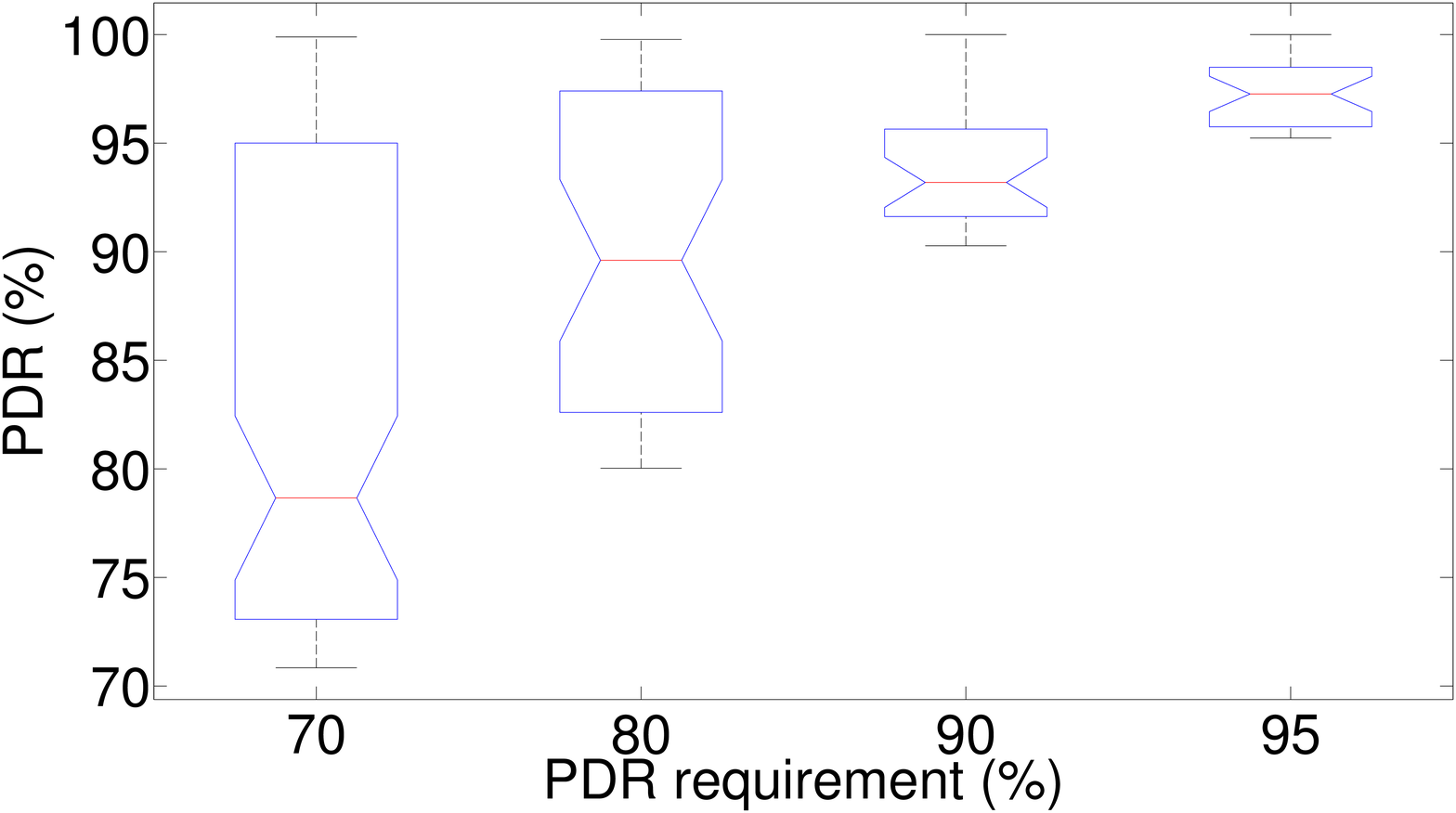}
\caption{Packet delivery ratio (PDR) of PRKS-NAMA in NetEye}
\label{fig:prks_nama_pdr_vs_pdr_req_boxplot}
\end{figure}

\begin{figure}[!tbhp]
\centering
\includegraphics[width=\figWidth]{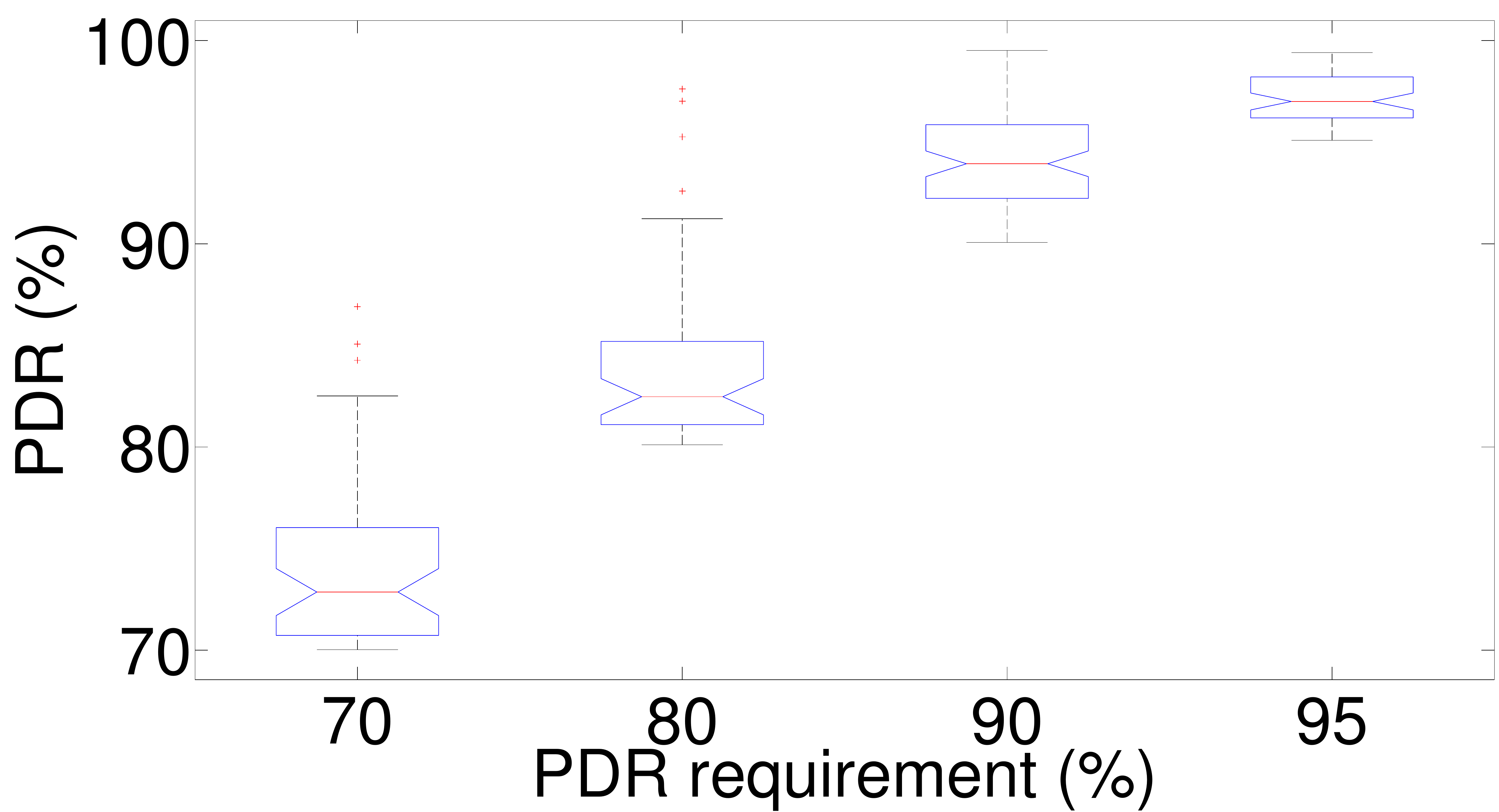}
\caption{Packet delivery ratio (PDR) of PRKS-ONAMA in NetEye}
\label{fig:prks_onama_pdr_vs_pdr_req_boxplot}
\end{figure}

% [358 1083;    340 978;    331 883;    303 750]
% 3.0 2.9 2.7 2.5 times
\begin{figure}[!tbhp]
\centering
\includegraphics[width=\figWidth]{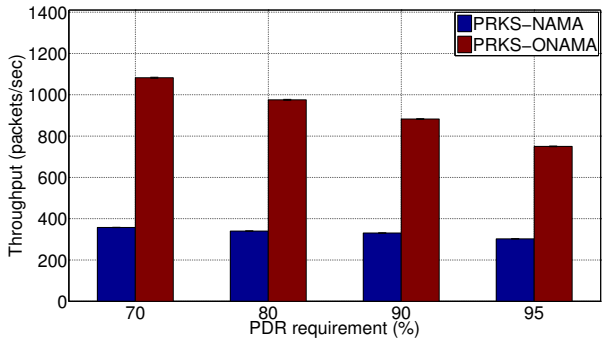}
\caption{Mean throughput in NetEye}
\label{fig:peer_slot_throughput_bar}
\end{figure}

% [282 53; 334 72; 370 92; 432 113]
% 5.3208 4.6389 4.0217 3.8230 times
\begin{figure}[!tbhp]
\centering
\includegraphics[width=\figWidth]{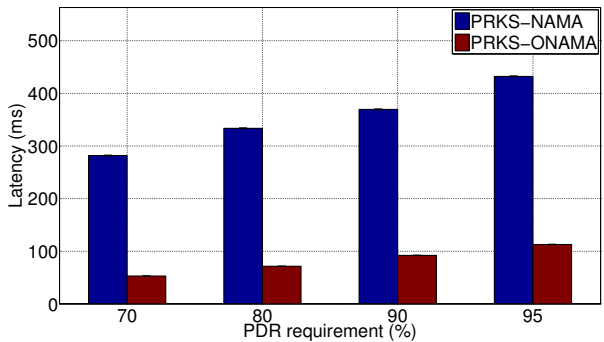}
\caption{Mean delay in NetEye}
\label{fig:peer_latency_bar}
\end{figure}

\subHeading{Indriya testbed.}
Figures~\ref{fig:prks_nama_pdr_vs_pdr_req_indriya_boxplot} and~\ref{fig:prks_onama_pdr_vs_pdr_req_indriya_boxplot} show the link PDRs of PRKS-NAMA and PRKS-ONAMA under different PDR requirements in Indriya, both meeting their requirements as in NetEye.

Figures~\ref{fig:prks_onama_vs_nama_vs_iorder_concurrency_indriya},~\ref{fig:peer_slot_throughput_bar_indriya}, and~\ref{fig:peer_latency_bar_indriya} show the mean concurrency,
%\footnote{Because of the maximal power and Received Signal Strength Indication (RSSI) accuracy constraints in TelosB, we did not run iOrder in Indriya, which is much larger than NetEye.}, 
 mean throughput, and mean delay of both protocols under different PDR requirements in Indriya, respectively. 
% footnote: because of hardware constraint - even power level 31 cannot be heard by everyone else (signal not detected above noise floor) - we skip iOrder in Indriya; also link more time-varying across floors
We see similar relative performance as in NetEye but the degree of improvement is smaller. For instance, PRKS-ONAMA's throughput is only 1.5, 1.3, 1.3, 1.4 times of PRKS-NAMA's when the PDR requirement is 70\%, 80\%, 90\%, and 95\%. This is because links in Indriya are sparser and a larger portion of them are already activated using NAMA.

\begin{figure}[!tbhp]
\centering
\includegraphics[width=\figWidth]{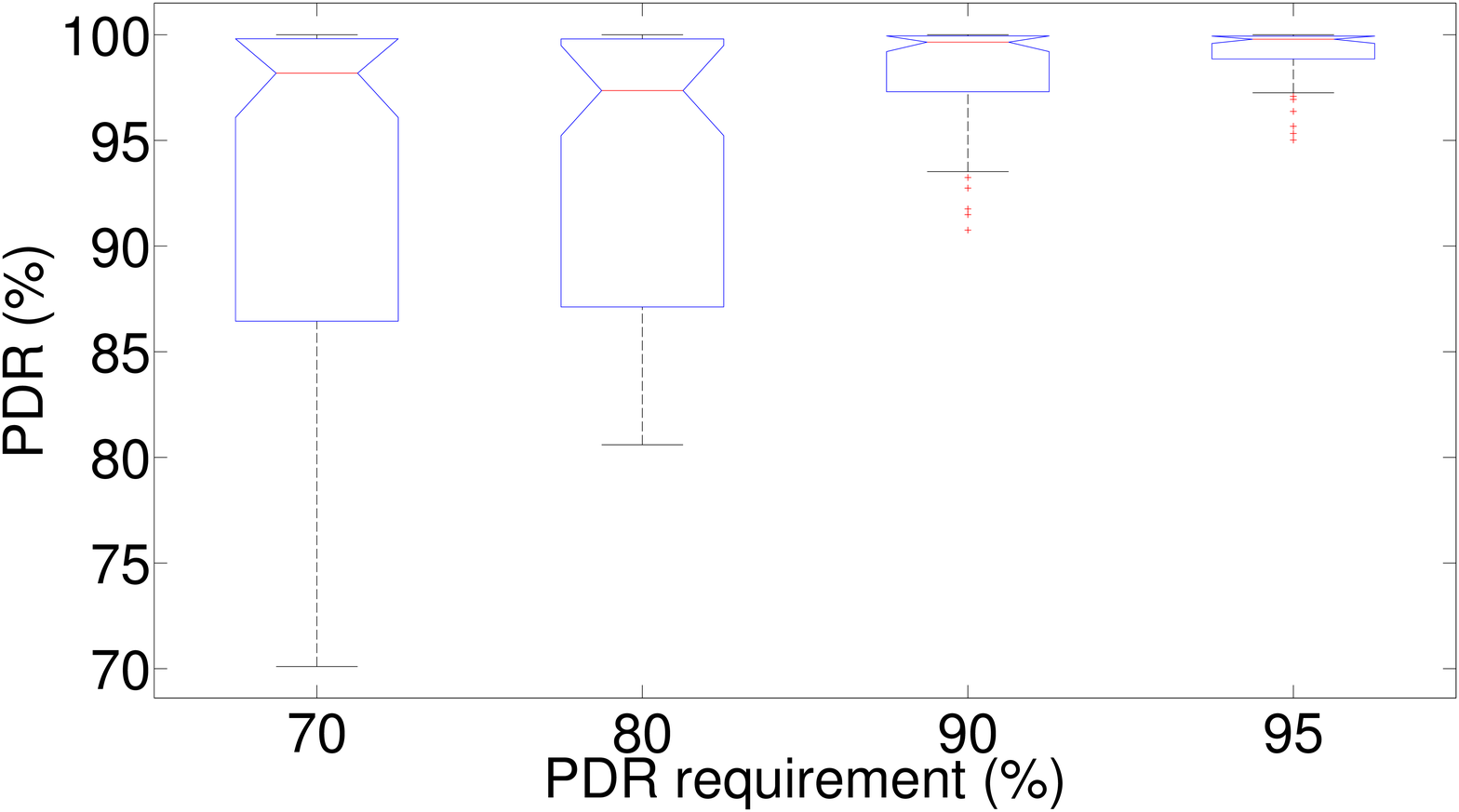}
\caption{Packet delivery ratio (PDR) of PRKS-NAMA in Indriya}
\label{fig:prks_nama_pdr_vs_pdr_req_indriya_boxplot}
\end{figure}

\begin{figure}[!tbhp]
\centering
\includegraphics[width=\figWidth]{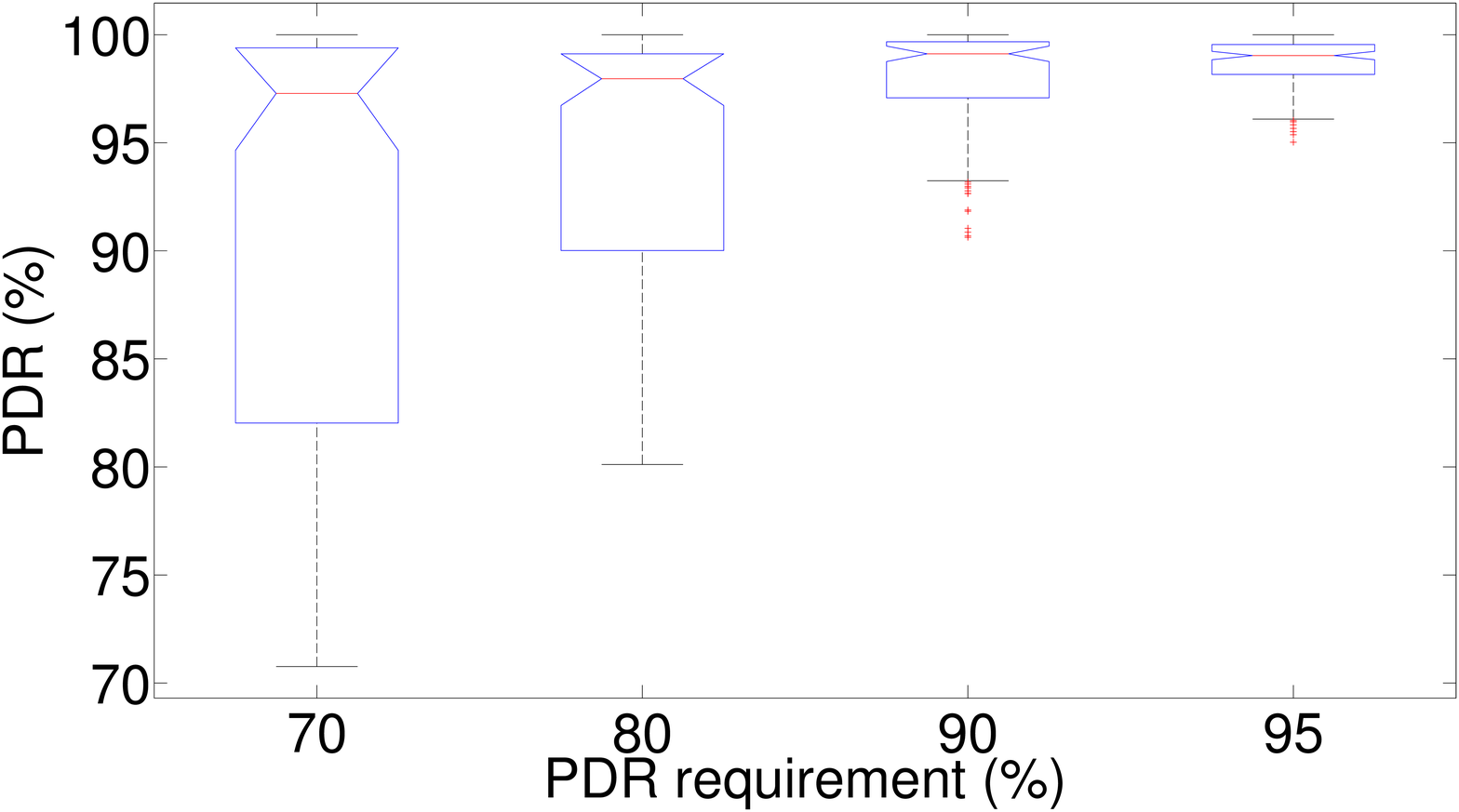}
\caption{Packet delivery ratio (PDR) of PRKS-ONAMA in Indriya}
\label{fig:prks_onama_pdr_vs_pdr_req_indriya_boxplot}
\end{figure}

% [12.3 19.5; 11.5 15.1; 10.5 14.1; 8.7 11.6]
% 1.5854    1.3130    1.3429    1.3333 times
\begin{figure}[!tbhp]
\centering
\includegraphics[width=\figWidth]{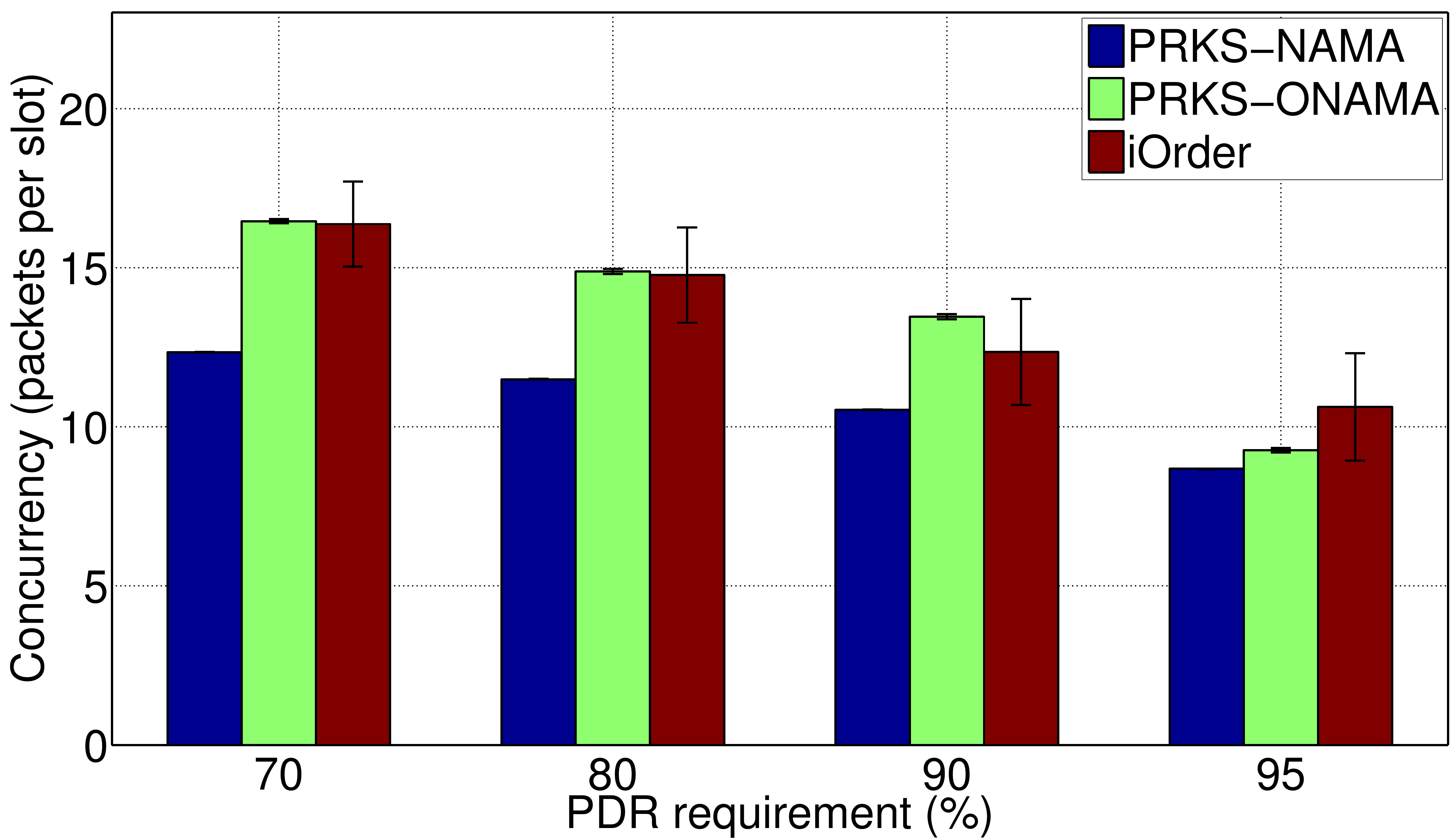}
\caption{Mean concurrency in Indriya}
\label{fig:prks_onama_vs_nama_vs_iorder_concurrency_indriya}
\end{figure}

%[2052 3070; 1947 2573; 1887 2519; 1558 2163]
% 1.4961    1.3215    1.3349    1.3883	times
\begin{figure}[!tbhp]
\centering
\includegraphics[width=\figWidth]{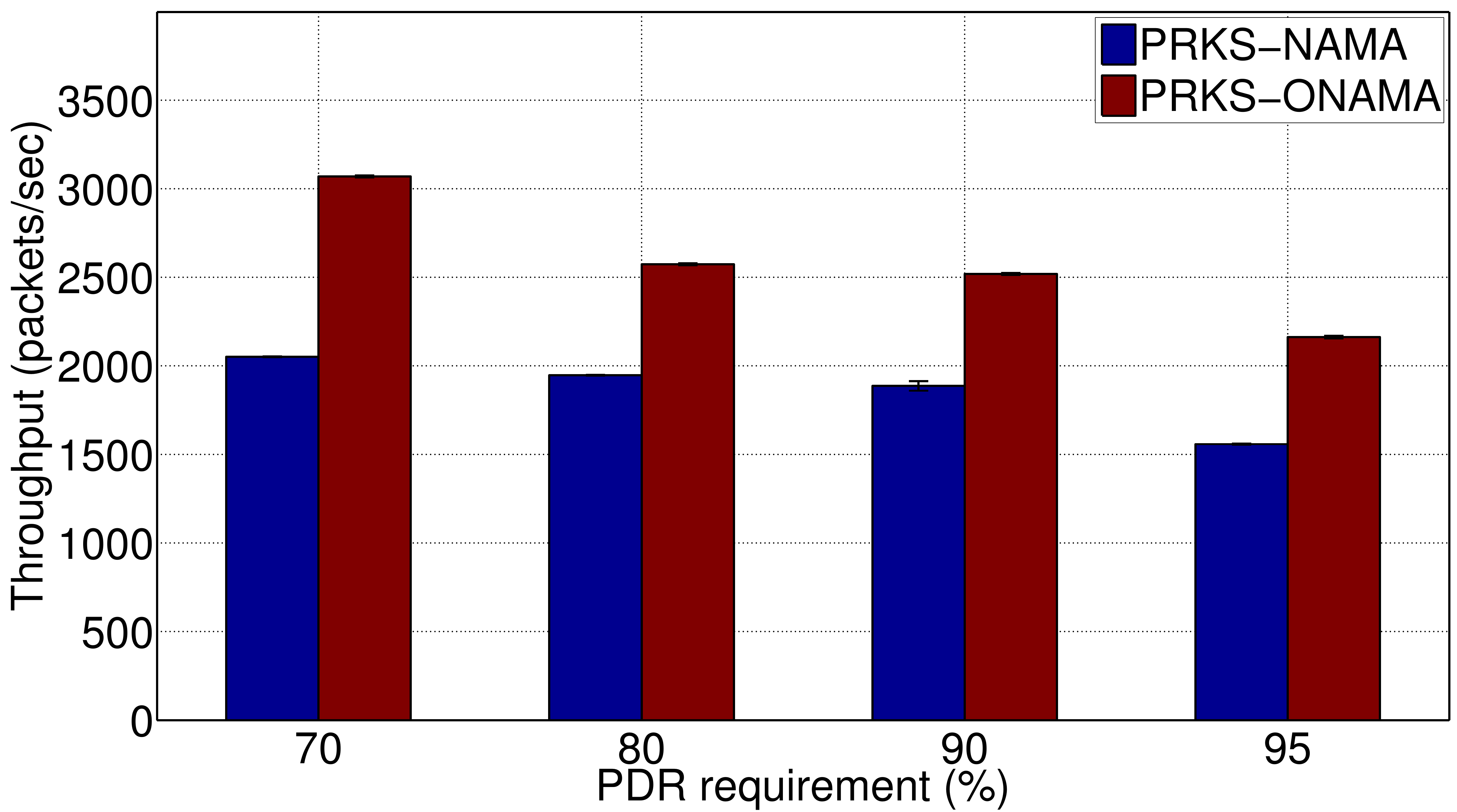}
\caption{Mean network throughput in Indriya}
\label{fig:peer_slot_throughput_bar_indriya}
\end{figure}

% [44.8 15.8; 47.8 19.6; 52.9 21.1; 65.9 25.8]
% 2.8354    2.4388    2.5071    2.5543 times
\begin{figure}[!tbhp]
\centering
\includegraphics[width=\figWidth]{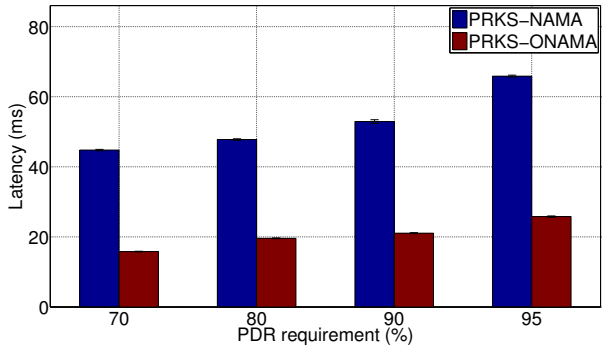}
\caption{Mean delay in Indriya}
\label{fig:peer_latency_bar_indriya}
\end{figure}

\section{Related Work}	\label{section:relatedwork}
There are a plethora of protocols to schedule transmissions collision-free in ad hoc networks, such as NAMA and HAMA, to name a few. In Node Activation Multiple Access protocol (NAMA) \cite{nama:bao:mobicom01}, each node generates a unique priority by hashing and only accesses the channel if its priority is the higher than those of its one-hop and two-hop neighbors. It achieves collision-free scheduling, but suffers from low channel utilization as some nodes are not activated even their activations cause no collision. Hybrid Activation Multiple Access protocol (HAMA) \cite{hama:bao:icnp02} builds upon and improves NAMA by activating non-contending entities, both nodes and links, whenever possible, thus it utilizes channel more efficiently. Unfortunately, it requires radios capable of code division multiplexing. In contrast with HAMA, ONAMA fully utilizes channel without additional requirement on radios.

Orthogonally, plenty of work identifies MIS in a distributed fashion, with the FastMISv2 algorithm in \cite{fastmisv2} being the closest to our work. Akin to DMIS in ONAMA, FastMISv2 also runs in multiple phases; in each phase, FastMISv2 includes a node in the MIS only if its priority is higher than those of all its neighbors, which are excluded from the MIS. There are quite a few critical distinctions, though. (1) Priority is calculated from a pseudorandom number generator (PRNG) with generally different seeds on different nodes, not from a common hash function. A node obtains its neighbor's priority by packet exchange, unlike in DMIS where a node computes its neighbor's priority directly, which expends precious channel resources. Additionally, a new priority is generated from PRNG in each phase, which consumes large amount of MCU time on embedded devices, while DMIS only generates priority once before the first phase. (2) Designed for generic distributed settings, FastMISv2 ignores control signalling and doesn't address the unique challenge posed by unreliable wireless channel. By contrast, ONAMA tackles the challenge explicitly in control signalling by incorporating wireless characteristics. (3) FastMISv2 does not handle the long delay incurred by MIS computation, making it unsuitable for delay-sensitive applications. (4) Lastly but importantly, FastMISv2 assumes a static graph and fails if the graph changes over time.

Finally, unlike all the aforementioned existing work that is evaluated in simulation at best, ONAMA is implemented on resource-limited devices and verified over two real-world testbeds.

\section{Conclusion}		\label{section:conclusion}
In this paper, we propose the ONAMA protocol that schedules maximal number of concurrent transmissions collision-free in a multi-hop wireless network. ONAMA has two pillars: a distributed MIS algorithm tailored to wireless characteristics and the pipelined precomputation technique to reduce DMIS's long delay. Extensive experiments on two testbeds, each with 127+ nodes, have independently shown that it increases concurrency by a factor of 3.7, increases throughput by a factor of 3.0, and reduces delay by a factor of 5.3 compared to the state of the art, while still catering to links' reliability requirements. 
% By using channel resources more efficiently, it also helps PRKS reduce delay and increase throughput.
Not limited to PRKS, ONAMA can be used as a primitive for other applications that require distributed MIS activation. Moreover, pipelined precomputation can be a general technique employed by other distributed applications, where some information is needed imminently but obtaining it requires considerable amount of time.

\bibliographystyle{abbrv}
% for arXiv submission
%\bibliography{../../Programming/LaTex/references}
\bibliography{references}
% argument is your BibTeX string definitions and bibliography database(s)
%\bibliography{IEEEabrv,../bib/paper}
%
% <OR> manually copy in the resultant .bbl file
% set second argument of \begin to the number of references
% (used to reserve space for the reference number labels box)
%\begin{thebibliography}{1}
%
%\bibitem{IEEEhowto:kopka}
%H.~Kopka and P.~W. Daly, \emph{A Guide to \LaTeX}, 3rd~ed.\hskip 1em plus
%  0.5em minus 0.4em\relax Harlow, England: Addison-Wesley, 1999.
%
%\end{thebibliography}

% that's all folks
\end{document}